\providecommand{\U}[1]{\protect\rule{.1in}{.1in}}
\newenvironment{proof}[1][Proof]{\noindent\textbf{#1.} }{\ \rule{0.5em}{0.5em}}
\providecommand{\U}[1]{\protect\rule{.1in}{.1in}}
\newtheorem{theorem}{Theorem}
\newtheorem{corollary}{Corollary}
\newtheorem{problem}{Problem}
\newtheorem{remark}{Remark}
\newtheorem{example}{Example}
\newtheorem{definition}{Definition}
\begin{document}

\title{\vspace*{-10mm}Deforming Lie algebras to Frobenius integrable non-autonomous
Hamiltonian systems}
\author{Maciej B\l aszak$^{1}$, Krzysztof Marciniak$^{2}$, Artur Sergyeyev$^{3}$\\$^{1}$Faculty of Physics, Division of Mathematical Physics,\\A. Mickiewicz University, Pozna\'n, Poland\\$^{2}$Department of Science and Technology, Campus Norrk\"oping,\\Link\"oping University, Sweden\\$^{3}$Mathematical Institute, Silesian University in Opava,\\Na Rybn\'\i{}\v{c}ku 1, 74601 Opava, Czech Republic\\E-mail: \texttt{blaszakm@amu.edu.pl, krzma@itn.liu.se,}\\\texttt{artur.sergyeyev@math.slu.cz} }
\maketitle

\begin{abstract}
Motivated by the theory of Painlev\'{e} equations and associated hierarchies,
we study non-autonomous Hamiltonian systems that are Frobenius integrable. We
establish sufficient conditions under which a given finite-dimensional Lie
algebra of Hamiltonian vector fields can be deformed to a time-dependent Lie
algebra of Frobenius integrable vector fields spanning the same distribution
as the original algebra. The results are applied to quasi-St\"{a}ckel systems
\cite{mb}.\looseness=-1

\textsl{Keywords:} Liouville integrability; Lie algebras; Frobenius
integrability; separable systems; quasi-St\"{a}ckel systems; 

\end{abstract}



\section{Introduction}

The importance of the role played by integrable systems is hard to
overestimate, given both their manifold applications and their profound
connections to a number of areas in pure mathematics, see
e.g.\ \cite{b98,d,rs} and references therein. In particular,
finite-dimensional integrable Hamiltonian dynamical systems are well
understood, and the key tool in their study
is the Liouville theorem \cite{l} relating integrability to existence of
sufficiently many independent integrals of motion in involution.

This beautiful and well-studied setup involves a blanket assumption that the
systems under study do not involve explicit dependence on the evolution
parameter, i.e., time. Allowing for such an explicit dependence is far from
trivial and necessitates certain
nontrivial modifications of the very notion of integrability, see
e.g.\ \cite{aspla} and references therein for details. It should be pointed
out that the research in this subfield is relatively scarce compared with that
centered around the integrable dynamical systems in the setting of the
Liouville theorem, cf.\ e.g.\ \cite{b98,d, g, mr} and references therein.

However, there is an important motivation for the study of explicitly
time-dependent dynamical systems and their integrability: the Painlev\'e
equations, which play an important role in many areas of modern mathematics
and in applications, cf.\ e.g.\ \cite{J, n} and references therein, as well as
certain natural generalizations thereof \cite{bcr}, can be written as
time-dependent dynamical systems, see e.g.\ \cite{bcr, J, oka, t}.

In the present paper we take an approach to the study of time-dependent
dynamical systems and their integrability that, to the best of our knowledge,
was not systematically explored in the earlier literature. Namely, we
simultaneously consider several vector fields and the associated dynamical
systems, each with its own time, while allowing for an explicit dependence of
all vector fields on all times at once and imposing the Frobenius
integrability condition guaranteeing local existence of associated multitime
solutions, as explained below. In order to construct such vector fields
depending on all times at once, we begin with a Lie algebra of
time-independent vector fields on the underlying manifold and look for the
multiparameter deformations of this algebra having the desired properties.

We expect that this approach, possibly supplemented by certain additional
assumptions, will yield new nonautonomous Painlev\'{e}-type dynamical systems.

We start by the following definition of \emph{Frobenius integrability} that is
naturally motivated by an important notion of an integrable distribution and
by the Frobenius theorem from differential geometry, cf.\ e.g.\ \cite{Fecko,
Lundell, mr, o} and references therein.

\begin{definition}
A set of $n$ non-autonomous vector fields $Y_{i}(t_{1},\ldots,t_{n})$, each
depending on $n$ parameters $t_{i}$, on a finite-dimensional manifold $M$ is
called \emph{Frobenius integrable} if the following zero-curvature condition
(Frobenius condition) holds:%
\begin{equation}
\frac{\partial Y_{i}}{\partial t_{j}}-\frac{\partial Y_{j}}{\partial t_{i}%
}-\left[  Y_{i},Y_{j}\right]  =0\text{\ for all }i,j=1,\ldots,n, \label{genFr}%
\end{equation}
where $[\cdot,\cdot]$ stands for the Lie bracket (commutator) of vector fields.
\end{definition}

It is rather straightforward, cf.\ e.g.\ \cite{Fecko,J}, to see that if the
Frobenius condition (\ref{genFr}) is satisfied then the associated set of $n$
dynamical systems on $M$%
\begin{equation}
\frac{dx^{\alpha}}{dt_{i}}=Y_{i}^{\alpha}(\xi,t_{1},\dots,t_{n}),\quad
\alpha=1,\dots,m=\dim M\text{, \ \ }i=1,\dots,n, \label{sys2}%
\end{equation}
possesses a local common multi-time solution $x^{\alpha}=x^{\alpha}%
(t_{1},\dots,t_{n},\xi_{0})$ for each point $\xi_{0}\in M$, i.e., for each
initial condition $x^{\alpha}(0,\dots,0,\xi_{0})=x_{0}^{\alpha}$. Here
$x^{\alpha}$ are local coordinates on $M$ on a neighborhood of a point
$\xi_{0}$, and $x_{0}^{\alpha}$ are coordinates of $\xi_{0}$ in this
coordinate system; $\xi\in M$ denotes a point on $M$ and $Y^{\alpha}(\xi
,t_{1},\ldots,t_{n})$ is the value of $\alpha$-th component of the vector
field $Y$ w.r.t.\ local coordinate system given by $x^{\alpha}$ at the point
$\xi$ at the times $t_{1},\ldots,t_{n}$. Under obvious technical assumptions
such solutions from a set of overlapping local coordinate systems that can be
glued together to define an integral submanifold $\xi=\xi(\xi_{0},t_{1}%
,\dots,t_{n})$ passing through $\xi_{0}$. Such a submanifold gives us a
natural coordinate-free representation for the solution of the system in question.

Note that equations (\ref{genFr}) formally look exactly like the
zero-curvature-type equations arising in the study of integrable partial
differential dispersionless systems with Lax operators written in terms of
vector fields, cf.\ e.g.\ \cite{d, ms, as} and references therein. On the
other hand, if one of the times $t_{i}$ is identified with the variable
spectral parameter and the vector fields are replaced by matrices, then
equations (\ref{genFr}) formally look like the isomonodromic representations
for the Painlev\'e and Painlev\'e-type systems, cf.\ e.g.\ \cite{bcr} and
references therein.

Suppose now that $M$ is endowed with a nondegenerate Poisson structure $\pi$,
so we have a Poisson manifold $(M,\pi)$, and the vector fields $Y_{i}$ are
Hamiltonian, that is, $Y_{i}=\pi dH_{i}$ for some Hamiltonian functions
$H_{i}$ depending explicitly, in general, on all times $t_{k}$: $H_{i}%
=H_{i}(\xi,t_{1},\dots,t_{n})$. We stress that in our setup the Poisson
structure $\pi$ does not depend on any of $t_{k}$.\looseness=-1

As by definition of the Poisson bracket $\{\cdot,\cdot\}$ associated with
$\pi$ we have $\left[  \pi dH_{i},\pi dH_{j}\right]  =-\pi d\left\{
H_{i},H_{j}\right\}  $ (we use the sign convention $\left\{  H_{i}%
,H_{j}\right\}  =\left\langle dH_{i},\pi dH_{j}\right\rangle $, where
$\left\langle \cdot,\cdot\right\rangle $ is the natural pairing among
$T_{z}^{\ast}M$ and $T_{z}M$, although the opposite sign convention also
occurs in the literature), we immediately obtain that for (\ref{genFr}) to
hold it suffices that the following zero-curvature condition (Frobenius
condition) for Hamiltonians $H_{k}$ holds:
\begin{equation}
\frac{\partial H_{i}}{\partial t_{j}}-\frac{\partial H_{j}}{\partial t_{i}%
}+\{H_{i},H_{j}\}=0,\quad i,j=1,\dots,n \label{zcr}%
\end{equation}
In the case when the vector fields $Y_{i}$ do not depend explicitly on the
times $t_{i}$ the conditions (\ref{genFr}) and (\ref{zcr}) reduce to $\left[
Y_{i},Y_{j}\right]  =0$ \ for all $i,j$ and $\{H_{i},H_{j}\}=0$ for all $i,j$,
respectively. In such a case the vector fields $Y_{i}$ span an involutive (and
thus integrable by the Frobenius theorem \cite{Fecko}, \cite{Lundell})
distribution $\mathcal{D}$ while the Hamiltonian vector fields constitute a
Liouville intgrable system under certain additional regularity conditions.\looseness=-1

Recall that the members of the hierarchy associated with a given
Painlev\'{e} equation admit non-autonomous Hamiltonian formulations with
evolution parameters $t_{j}$ and explicitly time-dependent Hamiltonians
that satisfy (\ref{zcr}), cf.\ e.g.\ \cite{k, oka, t}. This suggests that,
conversely, some of the dynamical systems with the Hamiltonians that satisfy
(\ref{zcr}) could possess the Painlev\'{e} property but we defer the
investigation of this idea in more detail to future work.\looseness=-1

Motivated by the above, in the present paper we study existence of
polynomial-in-times deformations of Lie algebras of autonomous Hamiltonians
$h_{i}$ (so the associated Hamiltonian vector fields $X_{i}=\pi dh_{i}$
satisfy (\ref{inv}) with $c_{ij}^{k}$ being constants) such that the deformed
Hamiltonians $H_{i}$ satisfy the condition (\ref{zcr}). In this way we
produce, from the system of non-commuting autonomous vector fields $X_{i}=\pi
dh_{i}$, polynomial-in-times vector fields $Y_{i}$ that satisfy (\ref{genFr}),
which guarantees existence of common multi-time solutions for the set of
non-autonomous systems (\ref{sys2}). Under certain natural assumptions
this deformation is shown to be unique, see Theorem \ref{main} in Section 2
for details. Then, in Section~3, we apply our general theory to the so-called
quasi-St\"{a}ckel systems \cite{mb} and present a way of explicit computation
of the deformations in question in this particular setting. As a result, we
construct a number of families of
non-autonomous Hamiltonian systems with $n$ degrees of freedom integrable in
the Frobenius sense.\looseness=-1

\section{Non-autonomous deformations of Lie algebras yielding Frobenius
integrability}

Consider an $n$-dimensional ($1<n<\dim M$) Lie algebra $\mathfrak{g}%
=\mathrm{span}\{h_{i}\in C^{\infty}(M),i=1,\dots,n\}$ of smooth real-valued
functions on our Poisson manifold $\left(  M,\pi\right)  $, with the structure
constants $c_{ij}^{k}\in\mathbb{R}$ so that
\begin{equation}
\{h_{i},h_{j}\}=\sum_{k=1}^{n}c_{ij}^{k}h_{k} \label{alg}%
\end{equation}
is the Lie bracket on $\mathfrak{g}$, cf.\ Definition 6.41 in \cite{olver}. We
assume that the functions $h_{i}:M\rightarrow\mathbb{R}$ (the Hamiltonians)
are functionally independent. The functions $h_{i}$ define $n$ autonomous
conservative Hamiltonian systems, cf.\ (\ref{sys2}),
\begin{equation}
\frac{dx^{\alpha}}{dt_{i}}=\left(  \pi(\xi)dh_{i}(\xi,t_{1},\dots
,t_{n})\right)  ^{\alpha},\quad\alpha=1,\dots,m=\dim M,\text{ \ \ }%
i=1,\dots,n,\quad\label{hs}%
\end{equation}
on $M$, and the Hamiltonian vector fields $X_{i}=\pi dh_{i}$ satisfy
\begin{equation}
\left[  X_{i},X_{j}\right]  =-\sum_{i=1}^{n}c_{ij}^{k}X_{k} \label{inv}%
\end{equation}
and thus span an involutive, and hence integrable in the sense of Frobenius,
distribution $\mathcal{D}$ on $M$.

It is well known that if (\ref{inv}) holds then one can choose a basis
$V_{1},\dots,V_{n}$ of vector fields spanning the distribution $\mathcal{D}$
such that
$\left[  V_{i},V_{j}\right]  =0$ for all $i,j=1,\dots,n$,
cf.\ e.g.\ \cite{Lundell}\looseness=-1. However, a direct (explicit)
construction of such a basis is usually not possible and the basis $V_{i}$
does not have to consist of Hamiltonian vector fields. We would therefore like
to have a method of deforming, in a precise sense defined below, the
autonomous vector fields $X_{i}$ to (non-autonomous in general) vector fields
$Y_{i}$ such that

\begin{enumerate}
\item The vector fields $Y_{i}$ span the same distribution $\mathcal{D}$ as
$X_{i}$ do.

\item The vector fields $Y_{i}$ are Hamiltonian with respect to $\pi$ just as
$X_{i}$, so $Y_{i}=\pi dH_{i}$ for some functions $H_{i}$ depending in general
on all times $t_{i}$.

\item The dynamical systems (\ref{sys2}) defined by the vector fields $Y_{i}$
possess common muti-time solutions, so that the condition (\ref{genFr}) is
satisfied and (\ref{zcr}) is valid for $H_{i}$.

Mathematically, this problem can be stated as follows.
\end{enumerate}

\begin{problem}
\label{problem} Denote by $\mathfrak{g}[t_{1},\dots,t_{n}]$ the space of
multivariate polynomials in $t_{1},\dots,t_{n}$ with values in $\mathfrak{g}
$.\newline1. Can one find (and, if yes, under which conditions) nonzero
polynomials $H_{i}\in\mathfrak{g}[t_{1},\dots,t_{n}]$, $i=1,\dots,n$, such
that the non-autonomous Frobenius condition (\ref{zcr}) holds and such that
$Y_{i}=\pi dH_{i}$ span the same distribution $\mathcal{D}$ as $X_{i}$
do?\newline2. Is there a unique answer to question 1?\newline3. Is there an
explicit way to calculate $H_{i}$?
\end{problem}

Thus, we will look for polynomial-in-times deformations $H_{i}$ of the
Hamiltonians $h_{i}$ such that $\pi dH_{i}$ and $\pi dh_{i}$ span the same
distribution $\mathcal{D}$ and such that the non-autonomous Hamiltonian
systems
\begin{equation}
\frac{dx^{\alpha}}{dt_{i}}=\left(  \pi(\xi)dH_{i}(\xi,t_{1},\dots
,t_{n})\right)  ^{\alpha},\quad\alpha=1,\dots,m=\dim M,\text{ \ \ }%
i=1,\dots,n,\quad\label{nhs}%
\end{equation}
satisfy the Frobenius condition (\ref{genFr}) and thus possess common
multi-time solutions $\xi=\xi(t_{1},\dots,t_{n},\xi_{0})$.

The first two questions of Problem \ref{problem} can be answered in the
following general setting.


\begin{theorem}
\label{main}Suppose that in a finite-dimensional Lie algebra $\mathfrak{g}$
there exists a basis $\{h_{i}\}_{i=1}^{n}$ such that

\begin{itemize}
\item[i)] $\mathfrak{g}_{c}=\mathrm{span}\{h_{i}:i=1,\dots,d_{c}\}$, where
$d_{c}\geq1$, is the center of $\mathfrak{g}$, so that for any $i=1,\dots
,d_{c}$ we have $\{h_{i},h\}=0$ for any $h\in\mathfrak{g}$;

\item[ii)] $\mathfrak{g}_{a}=\mathrm{span}\{h_{i}:i=1,\dots,d_{a}\}$, where
$d_{a}\geq d_{c}$, is an Abelian subalgebra of $\mathfrak{g}$;\looseness=-1

\item[iii)] $\left\{  h_{i},h_{j}\right\}  \in\mathrm{span}\left\{
h_{1},\dots,h_{\min(i,j)-1}\right\}  $ for all $i,j\leq n-1$.
\end{itemize}

Then there exists a unique multi-time-dependent Lie algebra (multi-time formal
deformation of $\mathfrak{g}$) with the generators $H_{i}\in\mathfrak{g[}%
t_{d_{c}+1},\dots,t_{i-1}]$, $i=1,\dots,n\,$\ such that Frobenius
integrability conditions (\ref{zcr}) hold, provided that

\begin{itemize}
\item[a)] $H_{i}=h_{i}$, $i=1,\dots,d_{a}$,

\item[b)] $H_{i}|_{t_{d_{c}+1}=0,\dots,t_{i-1}=0}=h_{i}$, $i=d_{a}+1,\dots,n$.
\end{itemize}
\end{theorem}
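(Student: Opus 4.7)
The plan is to construct the generators $H_j$ inductively on $j$ from $1$ to $n$, with $H_i$ prescribed by conditions a) and b) for $i \leq d_a$, and with $H_j$ for $j > d_a$ obtained by solving a system of first-order PDEs derived from (\ref{zcr}). Under the ansatz $H_j \in \mathfrak{g}[t_{d_c+1}, \ldots, t_{j-1}]$, the partial $\partial H_j/\partial t_i$ vanishes whenever $i \leq d_c$ or $i \geq j$, and $H_i = h_i$ is constant for $i \leq d_a$. Using these facts, the Frobenius condition (\ref{zcr}) with $i < j$ collapses to
\begin{equation*}
\frac{\partial H_j}{\partial t_i} = \{H_i, H_j\}, \qquad d_c + 1 \leq i \leq j-1, \quad j > d_a,
\end{equation*}
together with the Cauchy datum $H_j|_{t_{d_c+1} = \cdots = t_{j-1} = 0} = h_j$. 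The Frobenius conditions among indices $i, j \leq d_a$ hold trivially because $\mathfrak{g}_a$ is Abelian, while those involving a central index $i \leq d_c$ are automatic from the vanishing of the Poisson bracket with $h_i$.

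The inductive step for $j > d_a$ requires two verifications: compatibility of the PDE system and polynomiality of its solution. For compatibility, I would compute the mixed second partials $\partial^2 H_j/(\partial t_i \partial t_k)$ for $i < k < j$ in both orders, using the PDE for $H_j$ together with the inductively established relations $\partial H_k/\partial t_i = \{H_i, H_k\}$; the two expressions differ by a term that collapses, via the Jacobi identity for the Poisson bracket applied to $H_i, H_k, H_j$, to zero. Existence and uniqueness of a formal power-series solution in $t_{d_c+1}, \ldots, t_{j-1}$ matching the given initial value then follows by standard PDE theory.

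Termination of this formal series as a polynomial rests on the filtration $F_k := \mathrm{span}\{h_1, \ldots, h_k\}$ on $\mathfrak{g}$. Assumption iii) says precisely that $\mathrm{ad}_{h_i}(F_k) \subseteq F_{\min(i, k) - 1}$, so each $\mathrm{ad}_{h_i}$ with $i \leq n-1$ strictly lowers the filtration level and is therefore nilpotent on $F_{n-1}$. A preliminary induction confirms that $H_j \in F_j[t_{d_c+1}, \ldots, t_{j-1}]$ for $j \leq n-1$, using that the Poisson bracket of any two elements of $F_{n-1}$ again lies in $F_{n-1}$. Integrating the PDEs one variable at a time, the exponential flow $\exp(t_i\, \mathrm{ad}_{H_i})$ acts on a filtered polynomial input and returns a polynomial output of bounded degree in $t_i$; iterating across all the variables produces $H_j \in \mathfrak{g}[t_{d_c+1}, \ldots, t_{j-1}]$.

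Uniqueness is immediate from the standard uniqueness of formal solutions of compatible first-order PDE systems with prescribed initial data. In my view the main obstacle is the polynomial-termination step, especially in the case $j = n$: since condition iii) is only stated for $i, j \leq n-1$, the action of $\mathrm{ad}_{h_i}$ on $h_n$ is not directly controlled by the filtration. One must separately track the $h_n$-coefficient of $H_n$ and invoke the inductive structure (notably that the $H_i$ appearing on the right-hand side of the PDE all have $i < n$ and hence take values in $F_{n-1}$) to argue that the formal flow still terminates in finitely many steps.
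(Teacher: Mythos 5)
Your construction is, in substance, the paper's own proof: the inductive reduction of (\ref{zcr}) to the linear system $\partial H_j/\partial t_i=\{H_i,H_j\}$ with datum $H_j|_{t=0}=h_j$, the Jacobi-identity check of mixed partials (which in the paper appears as path-independence of the path-ordered exponential), and the filtration/nilpotency argument for termination correspond step by step, with your variable-by-variable integration playing the role of the Magnus expansion (\ref{aa})--(\ref{path0}). Everything you write is correct through $j\le n-1$.

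The obstacle you flag for $j=n$ is, however, a genuine gap, and the repair you sketch (tracking the $h_n$-coefficient and arguing that the flow still terminates) cannot be carried out from hypotheses i)--iii) alone, because the flow genuinely need not terminate. Take $\mathfrak{g}=\mathbb{R}h_1\oplus\mathfrak{aff}(1)$ with $h_1$ central and $\{h_2,h_3\}=h_3$ (realizable by functionally independent functions on $\mathbb{R}^4$, e.g.\ $h_1=p_2$, $h_2=q_1p_1$, $h_3=p_1$): conditions i)--iii) hold with $n=3$, $d_c=1$, $d_a=2$, yet the unique solution of $\partial H_3/\partial t_2=\{h_2,H_3\}$, $H_3|_{t_2=0}=h_3$, is $H_3=e^{\pm t_2}h_3$, which is not in $\mathfrak{g}[t_2]$. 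Indeed, writing $H_n=\sum_k c_k(t)h_k$, your coefficient-tracking gives $\partial c_n/\partial t_i=\alpha_i(t)\,c_n$, where $\alpha_i$ is the $h_n$-component of $\{H_i,h_n\}$, so polynomiality forces $\alpha_i\equiv0$; that is, one needs the extra hypothesis $\{h_i,h_n\}\in\mathrm{span}\{h_1,\dots,h_{n-1}\}$ for $i\le n-1$, i.e.\ that $\mathfrak{g}_{n-1}$ is an ideal on which $h_n$ acts by a derivation. Under that assumption your argument does close: $\{H_i,H_n\}$ then stays in $\mathfrak{g}_{n-1}[t]$ up to the (now constant) $h_n$-coefficient, and the filtration-lowering property of $\mathrm{ad}_{h_i}$ kills the series after at most $n$ steps. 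Note that the paper's own proof has exactly this lacuna: nilpotency of $\mathfrak{g}_{n-1}$ only makes the products $\mathrm{ad}_{H_{r_1}}\cdots\mathrm{ad}_{H_{r_s}}$ vanish on $\mathfrak{g}_{n-1}$, not when applied to $h_n$. The ideal condition does hold in all the quasi-St\"ackel algebras of Section 3 (see (\ref{str})), so the applications are unaffected, but as a general theorem the statement needs this additional assumption, or else the conclusion for $H_n$ must be weakened from a polynomial to a formal (or analytic) deformation.
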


The assumptions i)--iii) imply that $\mathfrak{g}_{c}\subset\mathfrak{g}
_{a}\subset\mathfrak{g}_{n-1}=\mathrm{span}\{h_{i}:i=1,\dots,n-1\}\subset
\nolinebreak\mathfrak{g}$. Moreover, iii) implies that $\mathfrak{g}_{n-1}$ is
a nilpotent subalgebra of the Lie algebra $\mathfrak{g}$ of codimension one.
The theorem of course encompasses the case when $\mathfrak{g}$ itself, rather
than just $\mathfrak{g}_{n-1}$, is nilpotent. Note also that thanks to the
assumption a) we have $H_{i}=h_{i}$ for $i=1,\dots,d_{a}$, so the Hamiltonians
$h_{i}$ spanning the Abelian subalgebra $\mathfrak{g}_{a}$ are not deformed.\looseness=-1

We stress that both the statement of Theorem~\ref{main} and its proof given
below are purely algebraic, so Theorem~\ref{main} holds not just for Lie
algebras of functions on a Poisson manifold but for an arbitrary
finite-dimensional Lie algebra $\mathfrak{g}$ which satisfies the conditions
of the theorem, with
(\ref{zcr}) replaced by
\[
\frac{\partial H_{i}}{\partial t_{j}}-\frac{\partial H_{j}}{\partial t_{i}%
}+[\![ H_{i},H_{j}]\!]=0,\quad i,j=1,\dots, n, \eqno{(5')}
\]
where $[\![\cdot,\cdot]\!]$ denotes the Lie bracket in $\mathfrak{g}$, and
$n=\dim\mathfrak{g}$. Then
in the proof the Poisson bracket $\{\cdot,\cdot\}$ should also be replaced by
$[\![\cdot,\cdot]\!]$.\looseness=-1

\begin{proof}
By virtue of a) we have that $H_{i}=h_{i}$ for $i=1,\dots, d_{a}$. Now, as
$\partial H_{j}/\partial t_{d_{a}+1}=0$ for $j=1,\dots,d_{a}$ by assumption,
the deformed Hamiltonian $H_{d_{a}+1}$ can be determined from the following
(part of) equations (\ref{zcr}):
\looseness=-1
\begin{equation}
\label{zcr0}\{H_{j},H_{d_{a}+1}\}-\frac{\partial H_{d_{a}+1}}{\partial t_{j}%
}=0,\quad j=1,\dots,d_{a}.
\end{equation}
This system has a (unique due to b)) solution
\begin{equation}
H_{d_{a}+1}=\exp\left(  -\sum\limits_{i=d_{c}+1}^{d_{a}}t_{i}\mathrm{ad}%
_{h_{i}}\right)  h_{q+1}, \label{a}%
\end{equation}
as we have $\mathrm{ad}_{h_{i}}=0$ for $i=1,\dots,d_{c}$; recall that by
definition $\mathrm{ad}_{f}(h)=\{f,h\}$ for any $f,h\in\mathfrak{g}$. Thus,
$H_{d_{a}+1}$ depends only on times $t_{d_{c}+1},\dots,t_{d_{a}}$. Note that
the expression in (\ref{a}) is a polynomial in $t_{d_{c}+1},\dots,t_{d_{a}}$
by virtue of the nilpotency assumption iii).\looseness=-1 For the remaining
$H_{i}$ we proceed by induction. Suppose that $H_{j}=H_{j}(t_{d_{c}+1}%
,\ldots,t_{j-1})$, $j=1,\dots, k$ are already known. Then $H_{k+1}$ can be
uniquely determined from equations (\ref{zcr}) which due to the fact that
$\partial H_{j}/\partial t_{k+1}\allowbreak=0$ for $j=1,\ldots, k$ read
\begin{equation}
\{H_{j},H_{k+1}\}-\frac{\partial H_{k+1}}{\partial t_{j}}=0,\quad j=1,\dots,k.
\label{h}%
\end{equation}
The first $d_{c}$ of these equations yield
\[
\frac{\partial H_{k+1}}{\partial t_{j}}=0,\quad j=1,\dots,d_{c},
\]
as for $j=1,\dots,d_{c}$ we have $\{H_{j},H_{k+1}\}=0$ by the assumption i).
This means that $H_{k+1}$ does not depend on $t_{1},\ldots, t_{d_{c}}$. The
remaining equations in (\ref{h}) therefore have a (unique due to b)) solution
of the form (cf.\ e.g.\ \cite{f} and references therein)
\begin{equation}
H_{k+1}=\mathcal{P}\exp\left(  -\int_{\gamma}\sum\limits_{i=d_{c}+1}%
^{k}\mathrm{ad}_{H_{i}}dt_{i}\right)  h_{k+1} \label{aa}%
\end{equation}
where $\gamma$ is any (smooth) curve in (an open domain of) $\mathbb{R}%
^{k-q_{0}}$ connecting the points $0$ and $(t_{d_{c}+1},\dots,t_{k})$ and
where $\mathcal{P} \exp$ denotes the path-ordered exponential, see
e.g.\ \cite{o} and references therein. This integral does not depend on a
particular choice of $\gamma$ because of the zero-curvature equations
(\ref{h}). Parameterizing the curve $\gamma$ by a parameter $\tau^{\prime}%
\in\left[  0,\tau\right]  $ so that $\gamma(0)=0$ and $\gamma(\tau
)=(t_{d_{c}+1},\dots,t_{k})$ yields
\begin{equation}
\label{F}\int_{\gamma}\sum\limits_{i=d_{c}+1}^{k}\mathrm{ad}_{H_{i}}%
dt_{i}=\int_{0}^{\tau}\sum\limits_{i=d_{c}+1}^{k}\mathrm{ad}_{H_{i}}%
|_{t_{j}=t_{j}(\tau^{\prime})}dt_{i}(\tau^{\prime})\equiv-\int_{0}^{\tau}%
F_{k}(\tau^{\prime})d\tau^{\prime}%
\end{equation}
\nopagebreak[4] where $F_{k}$ is an $\mathrm{End}(\mathfrak{g})$-valued
function of the parameter $\tau^{\prime}$.
The path-ordered exponential can be computed using the following formal Magnus
expansion, see e.g.\ \cite{o} and references therein:
\begin{equation}%
\begin{array}
[c]{l}%
\mathcal{P}\exp\left(  \displaystyle \int_{0}^{\tau}F_{k}(\tau^{\prime}%
)d\tau^{\prime}\right)  =\displaystyle\sum\limits_{s=0}^{\infty}\Omega_{s}%
^{k}\\[5mm]%
\displaystyle\equiv\sum\limits_{s=0}^{\infty}\displaystyle\frac{1}{s!}\int
_{0}^{\tau}d\tau_{1}^{\prime}\int_{0}^{\tau_{1}^{\prime}}\!d\tau_{2}^{\prime
}\cdots\int_{0}^{\tau_{s-2}^{\prime}}\!d\tau_{s-1}^{\prime}\int_{0}%
^{\tau_{s-1}^{\prime}}F_{k}(\tau_{1}^{\prime})\cdots F_{k}(\tau_{s}^{\prime
})d\tau_{s}^{\prime}.
\end{array}
\label{path0}%
\end{equation}
To complete the proof it remains to establish the polynomiality of $H_{k+1}$
in $t_{d_{c}+1},\dots, t_{k}$. This is achieved by observing that
$\Omega_{s}^{k}$ for all $k=d_{a}+1,\dots,n$ involve only $\mathrm{ad}_{H_{j}%
}$ with $j=d_{c}+1,\dots, n-1$ but do not involve $\mathrm{ad}_{H_{n}}$;
therefore $\Omega_{s}^{k}$ vanish for sufficiently large $s$ as the
expressions like
\[
\mathrm{ad}_{H_{r_{1}}}\,\mathrm{ad}_{H_{r_{2}}}\cdots\,\mathrm{ad}_{H_{r_{j}%
}}%
\]
will all vanish for sufficiently large $j$ if all $r_{i}$ belong to
$d_{c}+1,\dots,n-1$ as $\mathfrak{g}_{n-1}$ is nilpotent by virtue of
assumption iii).\looseness=-1
\end{proof}

Notice that the non-autonomous Hamiltonian systems (\ref{nhs}) are
conservative by construction, as the $i$-th Hamiltonian $H_{i}$ does not
depend on its own evolution parameter $t_{i}$. Moreover, for $k>j$ the
Frobenius conditions (\ref{zcr}) read\looseness=-1
\begin{equation}
\frac{\partial H_{k}}{\partial t_{j}}-\{H_{j},H_{k}\}=\frac{\partial H_{k}%
}{\partial t_{j}}+\{H_{k},H_{j}\}=\left(  \frac{\partial}{\partial t_{j}%
}+L_{Y_{j}}\right)  H_{k}=0,\qquad k>j \label{Li}%
\end{equation}
where $L_{Y_{j}}$ is the Lie derivative along the vector field $Y_{j}$, so all
$H_{k}$ with $k>j$ are time-dependent integrals of motion for the $j$-th flow.

\begin{remark}
Note that by the very construction of the Hamiltonians $H_{i}$ the vector
fields $Y_{i}=\pi dH_{i}$ span the same distribution $\mathcal{D}$ as
$X_{i}=\pi dh_{i}$, as required in part 1) of Problem \ref{problem}.\looseness=-1
\end{remark}

\section{Non-autonomous deformations\newline of quasi-St\"{a}ckel
Hamiltonians}

In this section we apply Theorem \ref{main} to quasi-St\"{a}ckel systems
constructed in \cite{mb}; cf.\ also e.g.\ \cite{b2005,bls2007,mb} for general
background on St\"ackel and quasi-St\"ackel systems. In this particular
setting we will be able to compute the expressions in (\ref{a}) and
(\ref{aa}), thus answering the question 3 of Problem \ref{problem}.

Fix an $n\in\mathbb{N}$, $n\geq2$. Consider a $2n$-dimensional Poisson
manifold $M$ and a particular set $(\lambda_{i},\mu_{i})$ of local Darboux
(canonical) coordinates on $M$, so that $\{\mu_{i},\lambda_{j}\}=\delta_{ij},$
$i,j=1,\dots,n$ while all $\{\lambda_{i},\lambda_{j}\}$ and $\{\mu_{i},\mu
_{j}\}$ are zero. Fix also $m\in\{0,\dots,n+1\}$ and consider the following
system of linear quasi-separation relations \cite{mb} (cf.\ also \cite{m})
\begin{equation}
\sum_{j=1}^{n}\lambda_{i}^{n-j}h_{j}=\frac{1}{2}\lambda_{i}^{m}\mu_{i}%
^{2}+\sum_{k=1}^{n}v_{ik}(\lambda)\mu_{k},\qquad i=1,\dots,n, \label{sep}%
\end{equation}
where
\[
\sum_{k=1}^{n}v_{ik}(\lambda)\mu_{k}=%
\begin{cases}
\displaystyle-\sum_{k\neq i}\frac{\mu_{i}-\mu_{k}}{\lambda_{i}-\lambda_{k}}, &
\text{for }m=0,\\[5mm]%
\displaystyle-\lambda_{i}^{m-1}\sum_{k\neq i}\frac{\lambda_{i}\mu_{i}%
-\lambda_{k}\mu_{k}}{\lambda_{i}-\lambda_{k}}+(m-1)\lambda_{i}^{m-1}\mu_{i}, &
\text{for }m=1,\dots,n,\\[5mm]%
\displaystyle-\lambda_{i}^{n-1}\sum_{k\neq i}\frac{\lambda_{i}^{2}\mu
_{i}-\lambda_{k}^{2}\mu_{k}}{\lambda_{i}-\lambda_{k}}+(n-1)\lambda_{i}^{n}%
\mu_{i}, & \text{for}\quad m=n+1.
\end{cases}
\]
Solving (\ref{sep}) with respect to $h_{j}$ yields, for each choice of
$m\in\{0,\dots,n+1\}$, $n$ Hamiltonians on $M$:
\[
h_{1}=E_{1}=\frac{1}{2}\mu^{T}G\mu,\quad h_{i}=E_{i}+W_{i},\quad i=2,\ldots,n
\]
where
\[
E_{i}=\frac{1}{2}\mu^{T}A_{i}\mu,\qquad W_{i}=\mu^{T}Z_{i},\quad
i=2,\ldots,n,
\]
are generated by the first respective second term on the right hand side of
(\ref{sep}) (we chose to omit the index $m$ in the above notation to simplify
writing). Here
\[
G=\text{diag}\left(  \frac{\lambda_{1}^{m}}{\Delta_{1}},\ldots,\frac
{\lambda_{n}^{m}}{\Delta_{n}}\right)  ,\quad\Delta_{i}=\prod\limits_{j\neq
i}(\lambda_{i}-\lambda_{j})
\]
can be interpreted as a contravariant metric tensor on an $n$-dimensional
manifold $Q$, $E_{1}$ can then be interpreted as the geodesic Hamiltonian of a
free particle in the pseudo-Riemaniann configuration space $(Q,g=G^{-1})$ so
that $M=T^{\ast}Q$ \cite{b2005,bls2007}. Next, $A_{r}=K_{r}G$, where $K_{r}$
are $(1,1)$-Killing tensors for metric $g$ with any chosen $m\in
\{0,\dots,n+1\}$, and are given by
\[
K_{i}=(-1)^{i+1}\text{diag}\left(  \frac{\partial\sigma_{i}}{\partial
\lambda_{1}},\dots,\frac{\partial\sigma_{i}}{\partial\lambda_{n}}\right)
\quad i=1,\dots,n
\]
where $\sigma_{r}(\lambda)$ are elementary symmetric polynomials in $\lambda$.
Moreover, $E_{i}$ are integrals of motion for $E_{1}$ as in fact they all
pairwise commute: $\{E_{i},E_{j}\}\nolinebreak=0,\ i,j=1,\ldots,n$. The vector
fields $Z_{i}$ are in this setting the Killing vectors of the metric $g$ for
any $m\in\{0,\dots,n+1\}$ as $L_{Z_{k}}g=0$, and they take the form \cite{mb}
\[
\left(  Z_{i}\right)  ^{\alpha}=\sum\limits_{k=1}^{i-1}(-1)^{i-k}%
\,k\,\sigma_{i-k-1}\frac{\lambda_{\alpha}^{m+k-1}}{\Delta_{\alpha}},\qquad
i\in I_{1}^{m}%
\]
and
\[
\left(  Z_{i}\right)  ^{\alpha}=\sum\limits_{k=1}^{n-i+1}(-1)^{i+k}%
\,k\,\sigma_{i+k-1}\frac{\lambda_{\alpha}^{m-k-1}}{\Delta_{\alpha}},\qquad
i\in I_{2}^{m}%
\]
where
\[
I_{1}^{m}=\{2,\dots,n-m+1\},\qquad I_{2}^{m}=\{n-m+2,\ldots,n\},\qquad
m=0,\ldots,n+1.
\]
Note that the above notation implies that%
\[
I_{1}^{0}=\{2,\dots,n\}\text{, }I_{1}^{n}=I_{1}^{n+1}=\emptyset\text{,
\ }I_{2}^{0}=I_{2}^{1}=\emptyset.
\]
It was demonstrated in \cite{mb} that the Hamiltonians $h_{i}$ constitute a
Lie algebra $\mathfrak{g}=\mathrm{span}\{h_{i}\in C^{\infty}(M)\colon
i=1,\ldots,n\}$ with the following commutation relations:
\[
\{h_{1},h_{i}\}=0,\quad i=2,\dots,n,
\]
and
\begin{equation}
\{h_{i},h_{j}\}=%
\begin{cases}
0, & \text{for }i\in I_{1}^{m}\text{ and }j\in I_{2}^{m},\\
(j-i)h_{i+j-(n-m+2)}, & \text{for }i,j\in I_{1}^{m},\\
-(j-i)h_{i+j-(n-m+2)}, & \text{for }i,j\in I_{2}^{m},
\end{cases}
\label{str}%
\end{equation}
where $i,j=2,\ldots,n$. We use here the convention that $h_{i}=0$ for $i\leq0$
or $k>n$.

\begin{remark}
The Lie algebra $\mathfrak{g}$ splits into a direct sum of Lie subalgebras
$\mathfrak{g=g}_{I_{1}}\mathfrak{\oplus g}_{I_{2}}$ where
\[
\mathfrak{g}_{I_{1}}=\mathrm{span}\{h_{1}\}\mathfrak{\oplus}\mathrm{span}%
\{h_{r}\colon r\in I_{1}^{m}\}\quad\mbox{and}\quad\mathfrak{g}_{I_{2}%
}=\mathrm{span}\{h_{r}\colon r\in I_{2}^{m}\}.
\]

\end{remark}

In order to successfully apply Theorem \ref{main} and formulas (\ref{a}) and
(\ref{aa}) we will now focus on the cases $m=0,1,$ when $\mathfrak{g=g}%
_{I_{1}}$, since $I_{2}^{m}$ is then empty. Note also that for these cases the
Lie algebra $\mathfrak{g}$ is nilpotent. Then (\ref{str}) reads%
\[
\mathrm{ad}_{h_{s_{1}}}h_{i}=\{h_{i},h_{s_{1}}\}=(i,s_{1})h_{i+s_{1}%
-(n-m+2)}\text{ with \thinspace}(i,s_{1})=s_{1}-i
\]
from which it immediately follows that for any $k\in\mathbb{N}$
\begin{equation}
\mathrm{ad}_{h_{s_{k}}}\cdots\mathrm{ad}_{h_{s_{1}}}h_{i}=(i,s_{1},\dots
,s_{k})h_{i+s_{1}+\ldots+s_{k}-k(n-m+2)} \label{1}%
\end{equation}
where
\begin{equation}
(i,s_{1}\dots,s_{k})=(i,s_{1},\dots,s_{k-1})[s_{k}-s_{k-1}-\cdots
-s_{1}-i+s(n-m+2)]. \label{2}%
\end{equation}
Note that in (\ref{1}) we put $h_{s}=0$ for $s<1$.

\begin{theorem}
\label{main2}Suppose that $m=0$ or $m=1$ (then $I_{2}^{m}$ is empty while
$d_{c}=2$ for $m=0$ and $d_{c}=1$ for $m=1$). Then the conditions of
Theorem~\ref{main} are satisfied and the polynomial-in-times deformation of
$\mathfrak{g}$ given by formulas (\ref{a}) and (\ref{aa}) can be written in
the form%
\begin{equation}
\hspace*{-5mm}%
\begin{array}
[c]{rcl}%
H_{i} & = & h_{i}-\displaystyle\!\!\!\sum_{r_{1}=d_{c}+1}^{i-1}\left(
\mathrm{ad}_{h_{r_{1}}}h_{i}\right)  t_{r_{1}}+\!\!\!\sum_{r_{1}=d_{c}%
+1}^{i-1}\sum_{r_{2}=r_{1}}^{i-1}\alpha_{ir_{1}r_{2}}\left(  \mathrm{ad}%
_{h_{r_{2}}}\mathrm{ad}_{h_{r_{1}}}h_{i}\right)  t_{r_{1}}t_{r_{2}}\\[7mm]
&  & -\displaystyle\sum_{r_{1}=d_{c}+1}^{i-1}\sum_{r_{2}=r_{1}}^{i-1}%
\sum_{r_{3}=r_{2}}^{i-1}\alpha_{ir_{1}r_{2}r_{3}}\left(  \mathrm{ad}%
_{h_{r_{3}}}\mathrm{ad}_{h_{r_{2}}}\mathrm{ad}_{h_{r_{1}}}h_{i}\right)
t_{r_{1}}t_{r_{2}}t_{r_{3}}+\cdots,
\end{array}
\label{3}%
\end{equation}
and the real constants $\alpha_{ir_{1}\cdots r_{k}}$ can be uniquely
determined from the Frobenius integrability condition (\ref{zcr}).
\end{theorem}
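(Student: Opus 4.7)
The plan is to combine the general existence/uniqueness from Theorem~\ref{main} with the explicit adjoint-action formulas (\ref{1})--(\ref{2}) that are specific to the quasi-St\"ackel setting. Once the hypotheses of Theorem~\ref{main} are checked, the polynomial $H_i$ produced by (\ref{a})--(\ref{aa}) must, by structural reasons, admit the symmetrized expansion (\ref{3}), and the uniqueness statement of Theorem~\ref{main} forces the coefficients $\alpha_{ir_1\cdots r_k}$ to be determined by the Frobenius equations (\ref{zcr}).

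First I would verify hypotheses i)--iii) of Theorem~\ref{main} for $m=0,1$. Since $I_2^m=\emptyset$ in both cases, (\ref{str}) collapses to $\{h_i,h_j\}=(j-i)h_{i+j-(n-m+2)}$ with the convention $h_s=0$ for $s\le 0$. Consequently $\{h_1,h_j\}=0$ always, and for $m=0$ one additionally has $\{h_2,h_j\}=(j-2)h_{j-n}=0$, yielding $d_c=2$ for $m=0$ and $d_c=1$ for $m=1$. Setting $d_a=d_c$ discharges ii) since the center is trivially Abelian. For iii), whenever $i,j\le n-1$ and $m\le 1$ we have $\max(i,j)\le n-1<n-m+2$, hence $i+j-(n-m+2)<\min(i,j)$, so $\{h_i,h_j\}\in\mathrm{span}\{h_1,\dots,h_{\min(i,j)-1}\}$ as required.

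Next I would proceed by induction on $i$. For $i\le d_a$ the statement is vacuous as $H_i=h_i$. Assuming $H_j$ has the form (\ref{3}) for $d_a<j<i$, the Magnus expansion (\ref{path0}) occurring in (\ref{aa}) writes $H_i$ as a formal series $\sum_s\Omega_s^{i-1}h_i$. By induction, each $\mathrm{ad}_{H_r}$ is a polynomial in the times with coefficients in $\mathrm{ad}(\mathfrak{g})$, so $\Omega_s^{i-1}h_i$ becomes a sum of iterated adjoint chains $\mathrm{ad}_{h_{r_s}}\cdots\mathrm{ad}_{h_{r_1}}h_i$ weighted by time monomials coming from the nested integrals. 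By (\ref{1})--(\ref{2}) every such chain is a scalar multiple of the single basis vector $h_{i+r_1+\cdots+r_s-s(n-m+2)}$, hence the monomial coefficients lie in $\mathfrak{g}$. Grouping these contributions by the multiset $\{r_1,\dots,r_k\}$ in ascending order produces the symmetrized form (\ref{3}) with explicit rational $\alpha_{ir_1\cdots r_k}$. Termination of the sum is inherited from the nilpotency argument already used in the proof of Theorem~\ref{main}, made quantitative here by the strict index shift in (\ref{1}).

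The main obstacle I anticipate is the symmetrization bookkeeping: the Magnus expansion yields sums over \emph{ordered} tuples $(r_1,\dots,r_s)$ paired with time-ordered nested integrals, whereas (\ref{3}) is indexed by \emph{unordered} multi-indices together with a single reference ordering of the adjoint chain. The enabling fact is that (\ref{1}) sends every ordering of a given multiset to a multiple of the \emph{same} basis vector, so consolidating orderings into one $\alpha$-coefficient is well defined; the scalar weights are read off from (\ref{2}). Finally, uniqueness of the $\alpha$'s is immediate: plugging the ansatz (\ref{3}) into (\ref{zcr}) and expanding via (\ref{1})--(\ref{2}) produces a triangular linear system graded by total degree in the times, whose diagonal entries are nonzero thanks to the strict index shift, so the system admits a unique solution, necessarily coinciding with the one obtained from (\ref{aa}).
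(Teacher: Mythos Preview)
Your proposal is correct and follows essentially the same strategy as the paper: verify hypotheses i)--iii) of Theorem~\ref{main} from the structure constants (\ref{str})/(\ref{1}), then expand the path-ordered exponential (\ref{aa}) and use (\ref{1})--(\ref{2}) to consolidate iterated adjoint chains into the symmetrized form (\ref{3}). The paper's own proof is considerably more terse---it simply says the form (\ref{3}) ``is obtained by a direct computation using the formulas given in the proof of Theorem~\ref{main} and taking a straight line for $\gamma$''---so your write-up actually supplies the bookkeeping the paper omits.

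Two minor remarks. First, you take $d_a=d_c$, which is a legitimate minimal choice for hypothesis ii), but the paper works with the larger Abelian subalgebra $d_a=\left[\tfrac{n+3-m}{2}\right]$ from (\ref{dimga}); this makes no difference to formula (\ref{3}) since the sums there begin at $d_c+1$ regardless, and the uniqueness in Theorem~\ref{main} guarantees both choices yield the same $H_i$. Second, the paper's mention of the straight-line choice for $\gamma$ is the concrete device that turns the nested time-ordered integrals in (\ref{path0}) into explicit monomials $t_{r_1}\cdots t_{r_k}$ with rational prefactors; you allude to this (``time monomials coming from the nested integrals'') but do not name the path. Neither point is a gap---your induction plus the key observation that (\ref{1}) collapses every ordering of a given multiset onto the same basis vector is exactly what makes the consolidation into a single $\alpha_{ir_1\cdots r_k}$ well defined.
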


\begin{proof}
The formula (\ref{1}) implies that the center $\mathfrak{g}_{c}$ for $m=0$ is
two-dimensional and given by $\mathfrak{g}_{c}=\mathrm{span}\left\{
h_{1},h_{2}\right\}  $ while for $m=1$ the center $\mathfrak{g}_{0}$ is
one-dimensional and spanned by $h_{1}$ only. The same formula implies also
that in both cases $\{h_{i},h_{j}\}\in\mathrm{span}(h_{1},\dots,h_{\min
(i,j)-1})\,$\ for all $i,j=1,\dots,n$ so the conditions i)--iii) of
Theorem~\ref{main} are satisfied. The explicit form (\ref{3}) of deformations
(\ref{a}) and (\ref{aa}) is obtained by a direct computation using the
formulas given in the proof of Theorem~\ref{main} and taking a straight line
for $\gamma$.\looseness=-1
\end{proof}

Notice that from (\ref{str}) it follows that the dimension $d_{a}$ of the
Abelian subalgebra $\mathfrak{g}_{a}$ of $\mathfrak{g}$ is given by
\begin{equation}
d_{a}=\left[  \frac{n+3-m}{2}\right]  ,\quad m=0,1 \label{dimga}%
\end{equation}
so by Theorem \ref{main} the first $d_{a}$ Hamiltonians $h_{i}$ will not be
deformed, and that in (\ref{3}) $\ i=d_{a}+1,\dots,n$.

Theorem \ref{main2} gives us an effective way of calculating the sought-for
deformations, as it will be demonstrated in the following examples. Of course,
the highest order of polynomials in $t_{j}$ obtained in this way depends on
$n$.


\begin{example}
\bigskip Consider the case $n=6$, $m=0$. Then the formulas (\ref{str}) yield
the following matrix of commutators $\{h_{i},h_{j}\}$:
\[
\left(
\begin{array}
[c]{cccccc}%
0 & 0 & 0 & 0 & 0 & 0\\
0 & 0 & 0 & 0 & 0 & 0\\
0 & 0 & 0 & 0 & 0 & 3h_{1}\\
0 & 0 & 0 & 0 & h_{1} & 2h_{2}\\
0 & 0 & 0 & -h_{1} & 0 & h_{3}\\
0 & 0 & -3h_{1} & -2h_{2} & -h_{3} & 0
\end{array}
\right)  ,
\]
and clearly $d_{c}=2$ while $d_{a}=4$. The explicit values of the expansion
coefficients $\alpha_{ir_{1}\dots r_{k}}$ can be obtained by plugging
(\ref{3}) into (\ref{zcr}). Having done this we obtain\looseness=-1
\[
H_{i}=h_{i},\quad i=1,\dots,4,\quad H_{5}=h_{5}+h_{1}t_{4},\quad H_{6}%
=h_{6}+3h_{1}t_{3}+2h_{2}t_{4}+h_{3}t_{5}.
\]

\end{example}

\begin{example}
For the case $n=6$, $m=1$ the formulas (\ref{str}) yield the following matrix
of commutators $\{h_{i},h_{j}\}$:
\[
\left(
\begin{array}
[c]{cccccc}%
0 & 0 & 0 & 0 & 0 & 0\\
0 & 0 & 0 & 0 & 0 & 4h_{1}\\
0 & 0 & 0 & 0 & 2h_{1} & 3h_{2}\\
0 & 0 & 0 & 0 & h_{2} & 2h_{3}\\
0 & 0 & -2h_{1} & -h_{2} & 0 & h_{4}\\
0 & -4h_{1} & -3h_{2} & -2h_{3} & -h_{4} & 0
\end{array}
\right)  ,
\]
so now $d_{c}=1$ while $d_{a}=4$ as in the previous example. Inserting
(\ref{3}) into (\ref{zcr})\ yields%
\begin{align*}
H_{i}  &  =h_{i},\quad i=1,\dots,4,\quad H_{5}=h_{5}+2h_{1}t_{3}+h_{2}%
t_{4},\quad\\
H_{6}  &  =h_{6}+4h_{1}t_{2}+3h_{2}t_{3}+2h_{3}t_{4}+h_{4}t_{5}-\frac{1}%
{2}h_{2}t_{5}^{2}.
\end{align*}

\end{example}

Now turn to the general study of the cases $m=n,n+1$, where
$\mathfrak{g=\mathrm{span}}\left\{  h_{1}\right\}  \mathfrak{\oplus g}_{I_{2}%
}$ (since $I_{1}^{m}$ is empty). The constants $(i_{1},\dots,i_{s+1})$ in
(\ref{2}) are the same as in the previously considered cases up to the sign,
i.e., $(i_{1},\dots,i_{s+1})\rightarrow(-1)^{s}(i_{1},\dots,i_{s+1})$. From
(\ref{str}) it follows that%
\begin{align*}
\text{for}\quad m=n:  &  \quad\mathfrak{g}_{c}=\mathrm{span}\left\{
h_{1}\right\}  ,\quad\mathfrak{g}_{a}=\mathrm{span}\left\{  h_{1}%
,h_{n-k+1},\dots,h_{n}\right\}  ,\\
\text{for}\quad m=n+1:  &  \qquad\ \mathfrak{g}_{c}=\mathrm{span}\left\{
h_{1},h_{n}\right\}  ,\quad\mathfrak{g}_{a}=\mathrm{span}\left\{
h_{1},h_{n-k+1},\dots,h_{n}\right\}  ,
\end{align*}
where $k=\left[  \frac{m}{2}\right]  $. Thus, $d_{c}=\dim\mathfrak{g}_{c}=1$
for $m=n$, $d_{c}=\dim\mathfrak{g}_{c}=2$ for $m=n+1$ while $d_{a}%
=\dim\mathfrak{g}_{a}=k+1$ in both cases. If we now rearrange the Hamiltonians
$h_{i}$ so that $h_{1}^{\prime}\equiv h_{1},$ $h_{i}^{\prime}\equiv h_{n-i+2}$
for $i=2,\dots,n$ we observe that in this ordering the assumptions of Theorem
\ref{main} are all satisfied. Actually, for $m=n+1$ the algebra $\mathfrak{g}$
is nilpotent, while for $m=n$ we have $\left\{  h_{n}^{\prime},h_{j}^{\prime
}\right\}  \in\mathrm{span}\left\{  h_{1}^{\prime},\dots,h_{j}^{\prime
}\right\}  $ which means that $\mathfrak{g}$ is a codimension one extension by
derivation of the nilpotent Lie algebra $\mathfrak{g}_{n-1}$. Thus, we obtain
the following\looseness=-1

\newpage

\begin{corollary}
Suppose that $m=n$ or $m=n+1$ (so $I_{1}^{m}$ is empty). Then the conditions
of Theorem \ref{main} are satisfied and the polynomial-in-times deformation of
$\mathfrak{g}$ given by formulas (\ref{a}) and (\ref{aa}), for the original
ordering of the Hamiltonians $h_{i}$, can be written in the form
\begin{equation}
\hspace*{-4mm}%
\begin{array}
[c]{rcl}%
H_{i} & = & h_{i}-\displaystyle\sum_{r_{1}=i+1}^{n}\left(  \mathrm{ad}%
_{h_{r_{1}}}h_{i}\right)  t_{r_{1}}+\sum_{r_{1}=i+1}^{n}\sum_{r_{2}=r_{1}}%
^{n}\alpha_{ir_{1}r_{2}}\left(  \mathrm{ad}_{h_{r_{2}}}\mathrm{ad}_{h_{r_{1}}%
}h_{i}\right)  t_{r_{1}}t_{r_{2}}\\[7mm]
&  & -\displaystyle\sum_{r_{1}=i+1}^{n}\sum_{r_{2}=r_{1}}^{n}\sum_{r_{3}%
=r_{2}}^{n}\alpha_{ir_{1}r_{2}r_{3}}\left(  \mathrm{ad}_{h_{r_{3}}}%
\mathrm{ad}_{h_{r_{2}}}\mathrm{ad}_{h_{r_{1}}}h_{i}\right)  t_{r_{1}}t_{r_{2}%
}t_{r_{3}}+\cdots
\end{array}
\label{8}%
\end{equation}
where the real constants $\alpha_{ir_{1}\cdots r_{k}}$ can be uniquely
determined from the Frobenius integrability condition (\ref{zcr}).\looseness=-1
\end{corollary}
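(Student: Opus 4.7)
The plan is to reduce the corollary to Theorem~\ref{main2} by means of the reindexing $h_1'=h_1$, $h_i'=h_{n-i+2}$ for $i=2,\ldots,n$ already introduced in the preceding paragraph, and then transport the resulting deformation formulas back to the original ordering.

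First, I would verify that the hypotheses i)--iii) of Theorem~\ref{main} hold in the reordered basis $\{h_i'\}$. Inspecting (\ref{str}), the centre consists of $h_1$ for $m=n$ and of $h_1,h_n$ for $m=n+1$, which after relabelling occupy the lowest positions $h_1'$ (respectively $h_1',h_2'$); the Abelian subalgebra $\mathrm{span}\{h_1,h_{n-k+1},\ldots,h_n\}$ becomes $\mathrm{span}\{h_1',\ldots,h_{k+1}'\}$, giving $d_a=k+1$. For iii), the structure constants in (\ref{str}) show that $\{h_i,h_j\}$ is proportional to $h_{i+j-2}$, whose index is strictly larger than both $i$ and $j$ in the original ordering and hence strictly smaller in the reversed ordering, yielding exactly $\{h_i',h_j'\}\in\mathrm{span}\{h_1',\ldots,h_{\min(i,j)-1}'\}$. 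The nilpotency of $\mathfrak{g}_{n-1}'$ follows directly, and for $m=n$ the extra top generator acts as a derivation of this nilpotent subalgebra, which is covered by the proof scheme of Theorem~\ref{main}.

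Second, I would apply Theorem~\ref{main2} to the reordered basis, obtaining deformed Hamiltonians $H_i'$ of the form (\ref{3}), with all summation indices $r_k'$ running over $\{d_c+1,\ldots,i-1\}$. Then, translating back via $h_j=h_{n-j+2}'$ and $t_j=t_{n-j+2}'$, and renaming the dummy summation indices by $r_k=n-r_k'+2$, the ranges $r_k'\in\{d_c+1,\ldots,i'-1\}$ become $r_k\in\{j+1,\ldots,n-d_c+1\}$ in the original indices (with $j=n-i'+2$). Because $\mathrm{ad}_{h_r}=0$ for every $h_r\in\mathfrak{g}_c$, i.e.\ for $r\in\{n-d_c+2,\ldots,n\}$, enlarging each range to $\{j+1,\ldots,n\}$ adds only vanishing contributions, so we recover the bounds displayed in (\ref{8}). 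The ordering constraints $r_1\le r_2\le\cdots$ are preserved up to a harmless rearrangement of the dummy indices (since the monomials $t_{r_1}t_{r_2}\cdots$ are symmetric), and the sign factors $(-1)^s$ noted in the text that appear under the index reversal are absorbed into the coefficients $\alpha_{ir_1\cdots r_k}$, whose unique determination by (\ref{zcr}) is inherited from Theorem~\ref{main}.

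I expect the only real difficulty to be notational rather than mathematical: carefully tracking how the order-reversing involution $r\mapsto n-r+2$ interacts simultaneously with the summation ranges, with the nested inequalities, and with the left-to-right ordering of the operators $\mathrm{ad}_{h_{r_k}}\cdots\mathrm{ad}_{h_{r_1}}$, so that one genuinely lands on (\ref{8}) and not on a cosmetically different but algebraically equivalent expression. Once this bookkeeping is carried out, the corollary is a direct consequence of Theorems~\ref{main} and~\ref{main2}.
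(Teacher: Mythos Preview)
Your approach is correct and is essentially the one the paper takes: the preceding paragraph already verifies that the reordered basis $\{h_i'\}$ satisfies the hypotheses of Theorem~\ref{main}, and the corollary simply restates the resulting expansion in the original indexing, exactly as you propose. One small caveat: you should not literally invoke Theorem~\ref{main2}, which is stated only for $m=0,1$; what you actually use is the \emph{proof method} of Theorem~\ref{main2} (direct computation from (\ref{a}) and (\ref{aa}) along a straight-line path) applied to the reordered algebra, and then translate back --- which is precisely what the paper means by ``the polynomial-in-times deformation \ldots\ for the original ordering''.
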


Note that $d_{c}$ does not enter the above formula; in the case of $m=n+1$
when $d_{c}=2$ the sums in (\ref{8}) end already at $n-1$ since then
$\mathrm{ad}_{h_{n}}=0$ as $h_{n}$ is part of the center of \ the algebra. As
before, the highest order of $t$-polynomials depends on $n$. By analogy with
the previous case, the $\left[  \frac{m}{2}\right]  +1$ Hamiltonians spanning
the Abelian subalgebra $\mathfrak{g}_{a}$, that is, $h_{1}$ and $h_{n-k+1}%
,\dots,h_{n}$ with $k=\left[  \frac{m}{2}\right]  $, are not deformed.

\begin{example}
Consider the case $n=6$, $m=n$. The matrix of commutators $\{h_{i},h_{j}\}$
(\ref{str}) reads%
\[
\left(
\begin{array}
[c]{cccccc}%
0 & 0 & 0 & 0 & 0 & 0\\
0 & 0 & -h_{3} & -2h_{4} & -3h_{5} & -4h_{6}\\
0 & h_{3} & 0 & -h_{5} & -2h_{6} & 0\\
0 & 2h_{4} & h_{5} & 0 & 0 & 0\\
0 & 3h_{5} & 2h_{6} & 0 & 0 & 0\\
0 & 4h_{6} & 0 & 0 & 0 & 0
\end{array}
\right)
\]
and since $k=\left[  \frac{m}{2}\right]  =3$ then the Hamiltonians $h_{i}$
with $i=1,4,5,6$ span an Abelian subalgebra $\mathfrak{g}_{a}$ and are thus
not deformed while $H_{2}$ and $H_{3}$ are found by inserting (\ref{8}) into
(\ref{zcr}) in order to determine the constants $\alpha_{ir_{1}\ldots r_{k}}
$. The result is
\begin{align*}
H_{i}  &  =h_{i},\quad i=1,4,5,6,\\
H_{2}  &  =h_{2}+h_{3}t_{3}+2h_{4}t_{4}+3h_{5}t_{5}+4h_{6}t_{6}+h_{5}%
t_{3}t_{4}+2h_{6}t_{3}t_{5},\\
H_{3}  &  =h_{3}+h_{5}t_{4}+2h_{6}t_{5}.
\end{align*}

\end{example}

\begin{example}
Consider now the case $n=6,~m=n+1$. The matrix of commutators $\{h_{i}%
,h_{j}\}$ reads now
\[
\left(
\begin{array}
[c]{cccccc}%
0 & 0 & 0 & 0 & 0 & 0\\
0 & 0 & -h_{4} & -2h_{5} & -3h_{6} & 0\\
0 & h_{4} & 0 & -h_{6} & 0 & 0\\
0 & 2h_{5} & h_{6} & 0 & 0 & 0\\
0 & 3h_{6} & 0 & 0 & 0 & 0\\
0 & 0 & 0 & 0 & 0 & 0
\end{array}
\right)
\]
Again, $h_{i}$ are not deformed for $i=1,4,5,6$ while $H_{2}$ and $H_{3}$ are
found as usual by inserting (\ref{8}) into (\ref{zcr}). The result is%
\[
H_{i}=h_{i},\ i=1,4,5,6,\ H_{2}=h_{2}+h_{4}t_{3}+2h_{5}t_{4}+3h_{6}%
t_{5}-\displaystyle\frac{1}{2}h_{6}t_{3}^{2},\ H_{3}=h_{3}+h_{6}t_{4}.
\]

\end{example}

Finally, let us again return to the general theory and find integrable
deformations of our algebra $\mathfrak{g}$ in the case $1<m<n$. In this case
both components $\mathfrak{g}_{I_{1}}$ and $\mathfrak{g}_{I_{2}}$ in the
splitting $\mathfrak{g=g}_{I_{1}}\mathfrak{\oplus g}_{I_{2}}$ are nontrivial,
with $\dim\mathfrak{g}_{I_{1}}=n-m+1$ and $\dim\mathfrak{g}_{I_{2}}=m-1$. Each
of the components has an Abelian subalgebra of its own. We denote the Abelian
subalgebras of $\mathfrak{g}_{I_{1}}$ and $\mathfrak{g}_{I_{2}}$ by
$\mathfrak{g}_{a_{1}}$ and $\mathfrak{g}_{a_{2}}$, respectively, with (compare
this with (\ref{dimga}))
\begin{align*}
\dim\mathfrak{g}_{a_{1}}  &  =\left[  \frac{n+3-m}{2}\right]  \equiv d_{a_{1}
}\\
\dim\mathfrak{g}_{a_{2}}  &  =\left[  \frac{m}{2}\right]  \equiv d_{a_{2}},
\end{align*}
so
\[
\dim\mathfrak{g}_{a}=\dim\mathfrak{g}_{a_{1}}+\dim\mathfrak{g}_{a_{2}};
\]
$\mathfrak{g}_{a_{1}}$ and $\mathfrak{g}_{a_{2}}$ are given by
\begin{align*}
\mathfrak{g}_{a_{1}}  &  =\mathrm{span}\left\{  h_{1},h_{2},\ldots
,h_{d_{a_{1}}}\right\}  \text{ }\\
\mathfrak{g}_{a_{2}}  &  =\mathrm{span}\left\{  h_{n-d_{a_{2}}+1},\ldots
,h_{n}\right\}
\end{align*}
so
\[
\mathfrak{g}_{a}=\mathrm{span}\left\{  h_{1},h_{2},\ldots,h_{d_{a_{1}}%
},h_{n-d_{a_{2}}+1},\ldots,h_{n}\right\}  .
\]
Therefore, the Hamiltonians $h_{d_{a_{1}}+1},\dots,h_{n-m+1}$ belonging to
$\mathfrak{g}_{I_{1}}$ should then be deformed by formulas (\ref{3}) with
$d_{c}=1$ (for $m=n+1$ the center is two-dimensional, spanned by
\thinspace$h_{1}$ and $h_{n}$, but $h_{n}$ does not belong to $\mathfrak{g}%
_{I_{1}}$) while the Hamiltonians $h_{1},\dots,h_{d_{a_{1}}}$ remain
unchanged. Likewise, the Hamiltonians $h_{n-m+2},\dots,h_{n-d_{a_{2}}}$ should
be deformed according to (\ref{8}) while the last $d_{a_{2}}$ Hamiltonians
remain unchanged.\looseness=-1

\begin{example}
Consider the case $n=11$, $m=6$. The matrix of commutators $\left\{
h_{i},h_{j}\right\}  $ is
\[
\left(
\begin{array}
[c]{ccccccccccc}%
0 & 0 & 0 & 0 & 0 & 0 & 0 & 0 & 0 & 0 & 0\\
0 & 0 & 0 & 0 & 0 & 4h_{1} & 0 & 0 & 0 & 0 & 0\\
0 & 0 & 0 & 0 & 2h_{1} & 3h_{2} & 0 & 0 & 0 & 0 & 0\\
0 & 0 & 0 & 0 & h_{2} & 2h_{3} & 0 & 0 & 0 & 0 & 0\\
0 & 0 & -2h_{1} & -h_{2} & 0 & h_{4} & 0 & 0 & 0 & 0 & 0\\
0 & -4h_{1} & -3h_{2} & -2h_{4} & -h_{4} & 0 & 0 & 0 & 0 & 0 & 0\\
0 & 0 & 0 & 0 & 0 & 0 & 0 & -h_{8} & -2h_{9} & -3h_{10} & -4h_{11}\\
0 & 0 & 0 & 0 & 0 & 0 & h_{8} & 0 & -h_{10} & -2h_{11} & 0\\
0 & 0 & 0 & 0 & 0 & 0 & 2h_{9} & h_{10} & 0 & 0 & 0\\
0 & 0 & 0 & 0 & 0 & 0 & 3h_{10} & 2h_{11} & 0 & 0 & 0\\
0 & 0 & 0 & 0 & 0 & 0 & 4h_{11} & 0 & 0 & 0 & 0
\end{array}
\right)
\]
If we perform the deformation on each subalgebra separately, as described
above, we obtain
\begin{align*}
H_{i}  &  =h_{i},\quad i=1,\dots,4,9,\dots,11,\quad H_{5}=h_{5}+h_{2}%
t_{4}+2h_{1}t_{3},\\
H_{6}  &  =h_{6}+4h_{1}t_{2}+3h_{2}t_{3}+2h_{3}t_{4}+h_{4}t_{5}-\frac{1}%
{2}h_{2}t_{5}^{2},\\
H_{7}  &  =h_{7}+h_{8}t_{8}+2h_{9}t_{9}+3h_{10}t_{10}+4h_{11}t_{11}%
+h_{10}t_{8}t_{9}+2h_{11}t_{8}t_{10},\\
H_{8}  &  =h_{8}+h_{10}t_{9}+2h_{11}t_{10}.
\end{align*}

\end{example}

\section*{Acknowledgments}

AS would like to thank R. Popovych and P. Zusmanovich for helpful comments.

The research of AS and MB, as well as the visit of MB to Opava in November
2017, were supported in part by the Grant Agency of the Czech Republic (GA
\v{C}R) under grant P201/12/G028. The research of AS was also supported in
part by the Ministry of Education, Youth and Sports of the Czech Republic
(M\v{S}MT \v{C}R) under RVO funding for I\v{C}47813059.

\end{document}